\newtheorem{theorem}{Theorem}[section]
\theoremstyle{definition}
\newtheorem{definition}[theorem]{Definition}
\newtheorem{example}[theorem]{Example}
\theoremstyle{remark}
\newtheorem{remark}[theorem]{Remark}
\numberwithin{equation}{section}
\newcommand{\C}{{\mathbb{C}}}
\newcommand{\R}{{\mathbb{R}}}
\newcommand{\T}{{\mathbb{T}}}
\newcommand{\FF}{{\mathcal{F}}}
\newcommand{\AAA}{{\mathcal{A}}}
\newcommand{\Spaces}{{\mathbf{Spaces}}}
\newcommand{\NCSpaces}{{\mathbf{NCSpaces}}}
\newcommand{\CommAlgebras}{{\mathbf{Algebras}}}
\newcommand{\NCAlgebras}{{\mathbf{NCAlgebras}}}
\newcommand{\TTT}{{\mathcal{T}}}
\newcommand{\Stacks}{{\mathbf{Stacks}}}
\newcommand{\Groupoids}{{\mathbf{Groupoids}}}
\newcommand{\GL}{{\mathrm{GL}}}
\newcommand{\Li}{{\mathrm{Li}}}
\newcommand{\per}{\mathrm{per}}
\newcommand{\even}{\mathrm{even}}
\newcommand{\odd}{\mathrm{odd}}
\newcommand{\topo}{\mathrm{top}}
\newcommand{\Img}{\mathrm{Im\ }}
\newcommand{\Ker}{\mathrm{Ker\ }}
\newcommand{\op}{\mathrm{op}}
\newcommand{\redu}{\mathrm{red}}
\newcommand{\OO}{\ensuremath{{\mathcal O}}}
\newcommand{\GG}{\ensuremath{{\mathcal G}}}
\newcommand{\HH}{\ensuremath{{\mathcal H}}}
\newcommand{\UU}{\ensuremath{{\mathcal U}}}
\newcommand{\VV}{\ensuremath{{\mathcal V}}}
\newcommand{\DD}{\ensuremath{{\slashed \partial}}}
\newcommand{\integer}{\ensuremath{{\mathbb Z}}}
\newcommand{\naturals}{\ensuremath{{\mathbb N}}}
\newcommand{\real}{\ensuremath{{\mathbb R}}}
\newcommand{\complex}{\ensuremath{{\mathbb C}}}
\newcommand{\rational}{\ensuremath{{\mathbb Q}}}
\newcommand{\trop}{\ensuremath{{\mathbb T}}}
\newcommand{\adeles}{\ensuremath{{\mathbb A}}}
\newcommand{\field}{\ensuremath{{\mathbb K}}}
\begin{document}

\title{Non-commutative Geometry Indomitable}
\author{Ernesto Lupercio}
\address{Department of Mathematics, Cinvestav, Av. IPN 2508, Mexico
City, Mexico 07300}
\email{lupercio@math.cinvestav.mx}
\thanks{The author is professor of mathematics at the Center for Research and Advanced Studies of the National Polytechnic  Institute (Cinvestav, IPN) in Mexico City.}

\thanks{This paper is to appear in the \emph{Notices of the American Mathematical Society} later this year}

\date{January 30, 2020 and, in revised form, August 15, 2020.}

\dedicatory{To Thomas Nevins, an indomitable geometer, in Memoriam.}

\keywords{Differential geometry, algebraic geometry, Toric varieties, Riemann Hypothesis}

\maketitle

\begin{abstract}
	This paper is a very brief and gentle introduction to non-commutative geometry geared primarily towards physicists and geometers. It starts with a brief historical description of the motivation for non-commutative geometry and then goes on to motivate the subject from the point of view of the the understanding of local symmetries affordee by the theory of groupoids. The paper ends with a very rapid survey of recent developments and applications such as non-commutative toric geometry, the standard model for particle physics and the study of the Riemann Hypothesis.   
\end{abstract}

\section{Introduction}

In was a monumental development in mathematics the discovery by Descartes that the (commutative) algebra, developed by the Persian civilization (al-Khwarismi, Omar Khayyam), and the towering edifice of Euclidean geometry -- the jewel of Greek mathematics -- were but two sides of the same coin. 

Thus, the publication in 1637 of \emph{La G\'eom\'etrie} (just an appendix to the \emph{Discourse on the Method}!) is one of the great masterpieces of mathematics literature; in it, a grand unification is put forward: that of (commutative) algebra with geometry. This development by Descartes did not come out of the blue, and one can trace most of the ideas of this treatise to various predecesors; nevertheless, it was in \emph{La G\'eom\'etrie} that the explicitness and clarity of the unification idea makes it take a life of its own.

Both, generalization and abstraction (more often than not, inspired and informed by physics) have always been a great source of inspiration for mathematicians, aiding us to overcome our prejudices and lack of imagination. And so, while today it sounds totally obvious the passage form the linearly ordered topologically complete field of the real numbers $\mathbf{R}$, to the algebraically closed field of the complex numbers $\mathbf{C}$ (where we have lost the ordering), and hence, just as obvious the passage from real geometry to complex geometry, it is not until the 1857 masterpiece by B. Riemann, \emph{Theorie der Abelschen Functionen}, that this new geometry enters into the conciousness of mathematicians at large. By losing the linear order of the field, we gain a theory of enormous beauty and coherence.

A different source for the development of mathematics has been the desire to solve ever more classes of equations; in fact, one could say that the evolution of the concept of number is inseparable from this development. 

From this perspective, just as complex numbers (which need two real components) appear motivated by the solution of the general quadratic equation, the introduction of (square) matrices (already appearing in the classical Chinese text \emph{The Nine-Chapters on the Mathematical Art}, 10th-2nd centuries BC), is motivated by the solution to the general system of linear equations (the word matrix used in this context first appears in 1850, not long before the paper of Riemann, in the work of J.J. Silvester). These remarks would suggest that one could think of the ring of matrices as a new kind of set of numbers, but numbers whose multiplication \emph{does not commute} (Gelfand). Again, by losing something (commutativity), in exchange, we gain something else (linear algebra first, then matrix groups, etc.). In fact, it is in A. Cayley's 1858 masterpiece. \emph{A Memoir on the Theory of Matrices}, where all this has come to fruition; there, Cayley proves the \emph{non-commutativity} property of matrix multiplication.

Just as classical mechanics is the inspiration behind Descartes, and electrostatics, the motivation behind Riemann, it is quantum physics, and its matrix non-commutative mathematics, what will inspire A. Connes introduction of non-commutative geometry in the 1980's; one could mark the official foundation of the field to his 1985 seminal paper \emph{Non-commutative Differential Geometry} \cite{Connes85} (although many ideas already appear in his 1980 note in the Comptes Rendus \cite{connes1980c}). But we need to backtrack a little in order to understand the main ideas of that paper.

\section{Matrix Mechanics}\label{MM}
In 1925, M. Born, V. Heisenberg and P. Jordan proposed a foundational framework for quantum mechanics in their classical papers \emph{Zur Quantenmechanik} \cite{born1925quantenmechanik} (both Born and Heisenberg received Nobel prizes in part for this work). The main  insight was Heisenberg's and it is the stuff of legend: in order to escape an attack of hay fever, on June 7, 1925, Heisenberg leaves G\"ottingen for Helgoland and between learning poems by Goethe and climbing, he realized that by positing \emph{non-commuting observables}, he can solve the observational puzzle of the behaviour of the spectral lines of hydrogen. After a sleepless night of calculation, he was so deeply shaken by the result that he left the house and awaited the sunrise on top of a rock. Heisenberg rapidly wrote a paper using this insight but, afraid of its own originality, he first asked Born to look at it; it didn't take long for Born to realize that he could make sense of it in the language of matrices making it more palatable.

Let's explain (following Connes \cite{Connes94}) how one would reach Heisenberg's conclusion (physical observables must be modeled by a non-commutative algebra) from the available experimental evidence at the end of the XIX century. The new phenomena being analyzed by then, and for which classical mechanics was naturally being used as a model, was the behavior of the realm of the very small, namely, atoms. The evidence came from spectrometry: by heating a tube filled with a certain chemical element, and decomposing the light it emits (perhaps by using a prism) into its various frequencies, we obtain thus a number of lines of light that is convenient to index by their wavelenghts. Such data -- the list of wavelenghts -- is what is known as an atomic spectrum; for example, for the hydrogen atom, by performing such experiment, one obtains as its spectrum the following ordered subset of $\mathbf{R}$:
$$
\left\{\frac{9}{5} L, \frac{16}{12} L, \frac{25}{21} L, \frac{36}{32} L\right\}
$$
where $L=36.456 \times 10^{-8}m$; here, we are considering wavelengths and not frequencies. In other words, we get the quantized (discrete, with an integral structure) spectrum 
$$
\left\{ \lambda =\frac{n^2}{n^2-4} L: n=2,4,5,6,\right\}
$$

For a more complicated atom, the experimental outcome is:
$$
\frac{1}{\lambda}=\frac{R}{m^2}-\frac{R}{n^2}
$$
where $R=4/L$ (Rydberg, 1890), $m$ is a fixed integer, and $n$ takes certain integer values. Again, we obtain a discrete spectrum.

Now, this experimental result contradicts classical physics; indeed, it is not very hard to show that Newtonian mechanics coupled with Maxwell's theory of electromagnetism imply together that the spectrum $\Gamma$ of such an atom should be an additive subgroup $\Gamma \subseteq \mathbf{R}$ of the real line. 

But what the experiments prescribe is the Ritz-Rydberg combination principle. If we choose frequency $\nu/\lambda$ as the more natural parameter for the spectral lines, then, there exists a discrete set $I:=\{i,j,\ldots \}$ of labels for the frequencies such that the spectrum is the set:
$$
\left\{ \nu_{ij}:=\nu_i -\nu_j|(i,j) \in I \times I \right\}.
$$
In this language, the Ritz-Rydberg combination principle states that:
$$
\nu_{ik}=\nu_{ij}+\nu_{jk},
$$
namely, the sum of \emph{certain} frequencies in the spectrum stay in the spectrum (we have a partially defined combination rule); $\nu_{ij}$ and $\nu_{lk}$ combine only when $j=l$. 

To explain this better, it is convenient to define two  structural maps; first the source map: 
$$ s:I\times I \to I, \ \ \  s(i,j):=i,$$
and  then the target map:
$$ t:I\times I \to I, \ \ \  t(i,j):=j.$$
The algebraic object:
$$\mathcal{I}:=(I,I\times I,s,t).$$

We must think of $(i,j)$ as an arrow connecting $i$ to $j$ and $I\times I$ as a set of arrows. Two arrows $(i,j)$ and $(l,k)$  can be composed (or ``multiplied"), but only whenever $j=l$:
$$(j,k)\circ (i,j):=(i,k).$$
This multiplication admits as many ``left identities" as elements $j$ of $I$ exist, for:
$$(j,j)\circ (i,j):=(i,j),$$
and the same goes for ``right identities".

$\mathcal{I}$ is what we will call a \emph{groupoid}, one of the main bulding blocks in non-commutative geometry.

\section{Groupoids}\label{SectGpd}

To make full sense of this result we will need the notion of a (small) groupoid $\mathcal{G}$ and of its convolution algebra $C(\mathcal{G})$. Groupoids are certain kind of algebraic objects: on the one hand, they generalize groups; on the other, they are  categories where every arrow has an inverse:

\begin{definition} A \emph{category} $\mathcal{C} =(C_0, C_1,s,t)$ consists of objects $C_0$ and arrows $C_1$ together with maps:
\begin{itemize}
	\item[a)] Two maps, $s:C_1\rightarrow C_0$ and $t:C_1\rightarrow C_0$, called \emph{the source map} and \emph{the target map} so that if for an arrow $\alpha\in C_1$ we have, $s(\alpha)=x$ and $t(\alpha)=y$, then we write 
	$$
	x \stackrel{\alpha}{\longrightarrow}y
	$$
	or
	$$
	\alpha : x\longrightarrow y.
	$$
	\item[b)] The \emph{identity-arrow} map
	$$
	i:C_0 \rightarrow C_1
	$$
	assigning to every object $x\in C_0$ its identity arrow
	\begin{align*}
		i(x) &: x\longrightarrow x,\\
		1_x&:= i(x).
	\end{align*}
	\item[c)] A composition law for arrows: 
	$$
	m:{C_1} \tensor[_t]{\times}{_s} C_1 \longrightarrow C_1
	$$
	Here ${C_1} \tensor[_t]{\times}{_s} C_1$ consists of pairs of arrows $(\alpha, \beta)$ so that $t(\alpha )=s(\beta)$. The composition law is only \emph{partially defined}, namely, the domain of $m$ is not all of $C_1\times C_1$ but only the subset ${C_1} \tensor[_t]{\times}{_s} C_1$.
	For the composition law, we often write:
	$$
	\beta \circ \alpha:= m(\alpha, \beta).
	$$
	These data satisfy the two strong algebraic conditions:
	\begin{itemize}
		\item[i)] Associativity:
		$$
		\alpha \circ (\beta \circ \gamma) = (\alpha \circ \beta) \circ \gamma,
		$$
		whenever $s(\alpha =t(\beta)$ and also $s(\beta) =t(\gamma)$. In other words, in the following commutative diagram, $\delta$ is well defined:
		
		\centerline{\xymatrix{
			x \ar@{->}[r]^\alpha & y \ar[d]^\beta\\
			w  \ar@{->}[u]_\delta & z \ar[l]^\gamma
		}}
		\item[ii)] Identity: if
		$$
		\alpha : x\longrightarrow y
		$$
		Then
		$$
		\alpha \circ 1_x = 1_y\circ \alpha =\alpha.
		$$ 
	\end{itemize}
\end{itemize}
	
\end{definition}

\begin{definition}
	Given two categories $(\mathcal{C},C_0,C_1,s,t)$ and $(\mathcal{C'},C'_0,C'_1,s',t')$ a functor $F:\mathcal{C}\to\mathcal{C}'$ is a rule assigning objects in $\mathcal{C}$ to objects in $\mathcal{C}'$ and arrows in $\mathcal{C}$ to arrows in $\mathcal{C}'$ and satisfying $F\circ s = s' \circ F$, $F\circ t = t' \circ F$, $F\circ i= i' \circ F$ and:
	$$ F(\alpha \circ \beta) = F(\alpha) \circ F(\beta).$$ 
\end{definition}

\begin{definition}
	Given objects $x,y\in C_0$, we say that $x$ is isomorphic to $y$ in $\mathcal{C}$ and write $x\cong y$ if the is an arrow $\alpha:x\to y$ together with an inverse arrow $\alpha^{-1}: y \to x$, that is to say, we have: $\alpha\circ\alpha^{-1} = 1_y, \ \ \ \alpha^{-1}\circ\alpha = 1_x.$ 
\end{definition}

\begin{example}
   The category $\mathcal{S}=(S_0, S_1,s,t)$ of all sets where $S_0$ is the class of all sets and $S_1$ the class of all mappings of sets. For two sets $x,y$, and a mapping:
   $$ \alpha:x\longrightarrow y,$$%
   we set $s(\alpha)=x$ and $t(\alpha )=y$. It is immediate to verify all the necessary algebraic conditions, and so, $\mathcal{S}$ is a category.

   This category has many subcategories of importance, for example, the category of all groups $\mathcal{G}=(G_0, G_1)$, where $G_0$ is the class of all groups and $G_1$ is the class of all group homomorphisms.
\end{example}

Category theory was discovered by S. Eilenberg and S. Mac Lane in the years of 1942-1945 and first appeared fully formed in their 1945 classical paper \emph{General Theory of Natural Equivalences} \cite{EilenbergMacLane} (very much under the influence of E. Noether, one of Mac Lane's teachers). In this work, the concept of category was mostly auxiliary, for the developments of homological algebra in algebraic topology had motivated Eilenberg and Mac Lane to understand systematically the concept of natural transformation.

Perhaps the best way to think of a natural transformation $\eta : F \Rightarrow F'$ from a functor $F: \mathcal{A} \to \mathcal{B}$ to a functor $F': \mathcal{A} \to \mathcal{B}$ is as a homotopy from $F$ to $F'$. To make sense of this, it is useful to define the \emph{unit interval category} $\mathcal{J}$  as the category having two objects $0$ and $1$ and three arrows (only one being a non-identity arrow) depicted below:

 \ 
 
 \centerline{\xymatrix{
 		0 \ar@(ul,ur)^{1_0} \ar@{->}[r] & 1 \ar@(ul,ur)^{1_1}
 }}

\ 

\begin{definition}
	A \emph{natural transformation} $\eta$ from $F$ to $F'$ is a functor 
	$$
	\eta : \mathcal{A} \times \mathcal{J} \longrightarrow \mathcal{B},
	$$
	so that $F= \eta|_{\mathcal{A}\times 0}$ and 
	$F'= \eta|_{\mathcal{A}\times 1}$.
	When such a natural transformation exists, we write $\eta:F \Longrightarrow F'$. 
\end{definition}

It is natural to define the composition (concatenation) $\eta\circ \eta'$ of natural transformations by considering the \emph{double interval category} which contains $\mathcal{J}$:

\

 \centerline{\xymatrix{
		0 \ar@(ul,ur)^{1_0} \ar@{->}[r] \ar@/_2pc/[rr]^\iota & 1/2 \ar@(ul,ur)^{1_{1/2}} \ar@{->}[r]& 1 \ar@(ul,ur)^{1_1}
}}

\

\begin{definition} 
	We say that $(\eta:F\to F',\eta':F'\to F)$ are a \emph{natural equivalence} (homotopy equivalence) of categories, and we write $\mathcal{A} \simeq \mathcal{B}$ if the concatenations $\eta \circ \eta'$ and $\eta'\circ\eta$ send $\iota$ to the identity transformations $1_F$ and $1_{F'}$ respectively. 
\end{definition}

Just as homotopy equivalet spaces do not need to have the same cardinality (the uncountable unit disc is homotopy equivalent to a point), equivalent categories can have vastly different number of objects. In fact, intuitively, if $\mathcal{A} \simeq \mathcal{B}$, then $\mathcal{B}$ can be obtained from $\mathcal{A}$ by means of a intermediate category $\mathcal{C}$: 

 \ 
 
 \centerline{\xymatrix{
		\mathcal{A}  &\  \mathcal{C}\  \ar@{->}[l] \ar@{->}[r]& \mathcal{B} 
}}

\ 

Both arrows induce equivalences, and the left arrow (resp. the right arrow) can be obtained from $\mathcal{A}$ (resp. $\mathcal{B}$) by deleting objects of $\mathcal{A}$ (and all arrows starting or ending in the deleted object) (resp. deleting objects of $\mathcal{B}$) making sure that $\mathcal{C}$ still has, at least, one object in every isomorphism class of objects in $\mathcal{A}$ (resp. $\mathcal{B}$); the left arrow thins out $\mathcal{A}$, and the second arrow fats up $\mathcal{C}$ to obtain $\mathcal{B}$. The diagram above is important, for it is an archetype for non-commutative geometry: we will see this later, when we talk about bi-bundles.

\begin{example}
	Consider the category $\mathcal{V}$ of complex $n$-dimensional vector spaces together with linear isomorphisms. It is not hard to see that this category is equivalent to the category $[\bullet/\GL_n(\C)]$ which has just one (abstract) object $\bullet$, and $n\times n$ invertible matrices as arrows with multiplication as its composition law. Notice that (by definition) every arrow in both categories has an inverse.
\end{example}

\begin{definition}
	A category $\GG=(G_0,G_1,s,t)$ in which for every arrow $\alpha:x\to y$ there exists an inverse arrow $\alpha^{-1}: y \to x$, namely an arrow so that: 
	$$\alpha\circ\alpha^{-1} = 1_y, \ \ \ \alpha^{-1}\circ\alpha = 1_x,$$
	is called a \emph{groupoid}.
\end{definition}

\begin{example}
	Every group $G$ can be made into a groupoid $[\bullet/G]:=(\{\bullet\},G,s,t)$ (for $s$ and $t$ the constant maps $G\to\{\bullet\}$) by considering the category $[\bullet/G]$ with one (abstract) object $\bullet$ and an arrow $\tilde{g}$ for every element $g\in G$. Given two arrows $\tilde{g}:\bullet\to\bullet$ and $\tilde{h}:\bullet\to\bullet$ (for $h,g\in G$), we define:
	$$ \tilde{g}\circ\tilde{h} := g\cdot h.$$ 
\end{example}

\begin{example}
	Every equivalence relation can be made into a groupoid. Consider a set $I$ and $R\subseteq I\times I$ an equivalence relation on $I$ ($R$ is the set of pairs $(i,j)$ so that $i$ is related to $j$). 
	Then, we can define a groupoid $[I/R]:=(I,R,s,t)$ writing $s(i,j):=i$, $t(i,j):=j$ and
	$$(i,j)\circ(j,k):=(i,k).$$ The verification of the claim that $[I/R]$ is a groupoid is immediate.
	
	\emph{The case $R=I\times I$ constructs the groupoid $[I/I\times I]$ arising from matrix mechanics at the end of last section}. In fact, the Ritz-Rydbergh combination principle can be interpreted as saying that the frequencies $\nu:[I/I\times I]\to [\bullet/\real]$ are the image of a real valued groupoid representation (and no longer a group representation). This implies that the space-momentum coordinates in the microscopic phase-space do not commute, as Heisenberg discovered.
\end{example}

\begin{example}
	Every group action $G\times M \to M$ of $G$ on $M$ can be made into a \emph{translation groupoid} $[M/G]:=(M,M\times G,s,t)$ by writing $s(m,g)=m$, $t(m,g):=g\cdot m$ and 
	$$(gm,h)\circ(m,g) := (m,hg).$$
\end{example}

For the purposes of geometry, it is useful to restrict our attention to small categories (which do not include the category of sets).

\begin{definition}
	We say that a category $\mathcal{C}=(C_0,C_1,s,t)$ is small if both $C_0$ and $C_1$ are sets.
\end{definition}

\begin{definition} 
	Given an object $x$ in $C_0$ for $\mathcal{C}$ a small category, the set of invertible arrows $g : x\to x$ forms a group called \emph{the automorphism group} of $x$ in $\mathcal{C}$.
\end{definition}

The main source of non-commutative spaces are groupoids that have a geometric structure, namely, topological and Lie groupoids.

\begin{definition}
	A topological (resp. Lie) groupoid is a small groupoid $\GG=(G_0,G_1,s,t)$ so that $G_0$ and $G_1$ are topological spaces (resp. Hausdorff smooth manifolds) and all structure maps $s$,$t$,$m$,$i$ are continuous (resp. smooth).
\end{definition}

Small discrete groupoids (where both $G_0$ and $G_1$ are discrete topological spaces) are really not more general than discrete groups asthe following example explains.

\begin{example}
	Consider a small discrete groupoid $\GG=(G_0,G_1,s,t)$, and using the axiom of choice, pick exactly one object in $G_0$ for every isomorphism class of objects in $\GG$. If we set $I:=G_0/\cong$ as the indexing set of isomorphism classes, we can write for such set of objects $(x_i)_{i\in I}$. Then it is not hard to see that $\GG$ is equivalent to the disjoint union $\amalg_i [\bullet/G_i]$, where $G_i$ is the group of automorphisms of $x_i$ in $\GG$.
\end{example}

From now on we will restrict our attention to Lie groupoids.

\begin{definition}
	We say that a smooth map of manifolds $f:M\to N$ is \emph{\'{e}tale} if it is a local diffeomorphism; that is to say $f$ is both a submersion and an immersion.
	We say that  $\GG=(G_0,G_1,s,t)$ is an \emph{\'{e}tale} Lie groupoid if $s$ is \'{e}tale.
\end{definition}

In fact, the main examples that we will consider in this note (foliation groupoids) can be made to be \'{e}tale \cite{CrainicMoerdijk, MoerdijkMrcun} (e.g. the non-commutative torus below).

\begin{example}
	A Lie groupoid $\GG:=[M/G]$ (usually called a \emph{translation groupoid}) is \'{e}tale whenever $G$ is discrete. 
\end{example}

\begin{example}
	A choice of an atlas $(U_i)_i$ for a manifold $M$, gioves rise to an \'{e}tale groupoid $\UU:=(\amalg_i U_i, \amalg_{(i,j)} U_{ij},s,t)$, where 
	\begin{itemize}
		\item $\amalg_i U_i:=\{(m,i)| m \in U_i\},$
		\item $\amalg_{(i,j)} U_{ij}:= \{(m,i,j) | m\in U_i\cap U_j\},$
		\item $s(m,i,j):=(m,i),$
		\item $t(m,i,j):=(m,j),$
		\item $(m,j,k)\circ (m,i,j):= (m,i,k).$
	\end{itemize}
\end{example}

We need a geometric version of the equivalence of groupoids that corresponds to the equivalence of categories of the previous section:

\begin{definition}
	 Given two Lie groupoids $\HH=(H_0,H_1,s,t)$ and $\GG=(G_0,G_1,s,t)$, a morphism $\phi_i : H_i \to G_i$, $i=0,1$, is an \emph{essential equivalence} if 
	\begin{itemize}
		\item[i)] $\phi$ induces a surjective submersion $(y,g) \mapsto t(g)$ from $$H_0 \times_{G_0} G_1 =\{ (y,g) | \phi(y)=s(g) \}$$ onto $H_0$; and
		\item[ii)] $\phi$ induces a diffeomorphism $h \mapsto(s(h)\phi(h),t(h))$ from $H_1$ to the pullback $H_0 \times_{G_0} G_1 \times_{G_0} H_0.$
	\end{itemize}
	We say that two Lie groupoids $\GG'$ and $\GG$ are \emph{Morita equivalent} if there exists a Lie groupoid $\HH$ and two essential equivalences $ \GG \leftarrow \HH \rightarrow \GG'$ (and we will say that $\HH$ is a $\GG$-$\GG'$-bi-bundle).
	The equivalence class $\bar{\GG}$ of the groupoid $\GG$ under Morita equivalence is called the \emph{$C^{\infty}$-stack associated to $\GG$}.
\end{definition}

\begin{example} 
	Given a fixed manifold $M$ and two atlases $(U_i)$ and $(V_j)$, then the two associated \'{e}tale groupoids $\UU$ and $\VV$ are Morita equivalent if and only if the atlases are equivalent in the atlas sense. Thus, $M$ itself is the stack associated to $\UU$ (and $\VV$): 
	$$ M \cong \overline{\UU} \cong \overline{\VV}.$$  
\end{example}

\begin{example}
	Consider a foliated manifold $(M,\FF)$ with $q$  the codimension of the foliation. The holonomy (or foliation) groupoid $\HH = \mathrm{Holo}(M,\FF)$ has as objects $H_0 = M$, and two objects $x,y$ in $M$ are connected by an arrow if and only if they belong to the same leaf $L$; arrows from $x$ to $y$ are in correspondance to homotopy classes of paths lying on $L$ starting at $x$ and ending at $y$.  The foliation groupoid $\HH = \mathrm{Holo}(M,\FF)$  is always Morita equivalent to an \'{e}tale groupoid for if we take an embedded $q$-dimensional transversal manifold $T$ to the foliation that hits each leaf at least once then the restricted groupoid $\HH|_T$ is an \'{e}tale groupoid, and, moreover, it is Morita equivalent to $\HH = \mathrm{Holo}(M,\FF)$ \cite{MoerdijkMrcun}.
\end{example}

\section{Convolution Algebras}

It is time to explain how to obtain a non-commutative algebra out of a groupoid.

\begin{definition}
	Given an \'{e}tale groupoid $\GG$, we associate to it a non-commutative algebra $A_\GG$, the \emph{convolution algebra} of $\GG$; its elements are compactly supported smooth complex valued functions on the manifold $G_1$ of arrows of $\GG$,  $f\colon G_1 \to \complex$. The \emph{convolution product} $f * g$ of two functions is given by:
	$$(f*g)(\alpha) = \sum_{\beta\circ\gamma=\alpha} f(\beta)g(\gamma),$$ where the sum is well defined because it ranges over a discrete space ($\GG$ is \'{e}tale) and finite because the functions are required to be compactly supported. The algebra $A_\GG$ can be made into a $C^*$-algebra. In general, $A_\GG$ is a non-commutative algebra.
\end{definition}

\begin{example}
	Consider a (discrete) group $G$, the convolution algebra of the groupoid $[\bullet/G]$ is exactly the same as the group algebra of $G$.
\end{example}

\begin{example}
	Consider now the Heisenberg groupoid $[I/I\times I]$ from matrix mechanics. Its convolution algebra is a matrix algebra:
	$$A_{[I/I\times I]}\cong \mathrm{Mat}_{n\times n}(\complex),$$
	where $n$ is the cardinality of $I$.
\end{example}

The category of all categories is actually a \emph{2-category}: it has objects, and for every pair of objects $x$, $y$, the family of arrows going from $x$ to $y$ is itself a category. An arrow $\eta: \alpha\to\beta$ between arrows is referred to as a $2$-arrow.

There are two 2-categories that are of great importance in non-commutative geometry: the 2-category of groupoids and the 2-category of algebras. Due to space considerations, I am all but ignoring the analytical issues concerning $C^*$-algebras, which is too bad for it is a very important ingredient in the field; in any case, we will be working only at the formal level from now on.

The 2-category of groupoids $\Groupoids$ has:
\begin{itemize}
	\item[G1)] Objects: groupoids.
	\item[G2)] Arrows: (smooth) functors.
	\item[G3)] 2-arrows: natural transformations.
\end{itemize}

The 2-category of algebras $\NCAlgebras$ has:
\begin{itemize}
	\item[A1)] Objects: associative (possibly non-commutative) algebras.
	\item[A2)] Arrows: bimodules over algebras.
	\item[A3)] 2-arrows: bimodule morphisms.
\end{itemize}

Observe that a morphism $A \to B$ of algebras is not an algebra homomorphism but rather a bi-module ${}_A M_B$. The composition of two arrows (bimodules) is given by:
$$ {}_B M_C \circ {}_A M_B:= {}_A M_B \otimes {}_B M_C.$$ 
The notion of isomorphism of algebras in this category is called Morita equivalence of algebras.

\begin{definition}
	Two algebras $A$ and $B$ are \emph{Morita equivalent} iff there is an $A$-$B$-bimodule $M$, and a $B$-$A$-bimodule $N$ so that $M\otimes_B N \cong A$ (as $A$-$A$-bimodules), and $N\otimes_A M \cong B$ (as $B$-$B$-bimodules). Equivalently, $A$ and $B$ are Morita equivalent if and only if their categories of modules $A$-$\mathrm{Mod}$ and $B$-$\mathrm{Mod}$ are equivalent.
\end{definition}

\begin{example}
	Two commutative algebras are Morita equivalent iff they are isomorphic.
\end{example}

The important point \cite{Mrcun} is that there is a convolution 2-functor:
$$\Groupoids \longrightarrow \NCAlgebras,$$
that, when restricted to objects, sends $\GG$ to its convolution algebra $A_\GG$.

This implies immediately that (for \'etale groupoids) if the groupoid $\GG$ is Morita equivalent to $\GG'$ (as groupoids), then the algebra $A_\GG$ is Morita equivalent to $A_{\GG'}$ (as algebras): the Morita equivalence class $\bar{A_\GG}$ only depends on the stack $\bar{\GG}$ and not on the groupoid. But two completely different stacks could have the same convolution algebra.

\begin{example}
	Given a compact manifold $M$ and an atlas $(U_i)$, the (non-commutative) convolution algebra $A_\UU$ of the groupoid $\UU$ associated to the atlas is Morita equivalent to $C(M)$ the algebra of smooth complex valued functions on $M$ which is commutative.
\end{example}

\begin{example}
	Consider the groupoids $\GG_1=[\bullet/\integer]$ and $\GG_2=[\integer/\{1\}]$. The first one is connected, while the second has infinitely many components; therefore, the groupoids $\GG_1$ and $\GG_2$  are not Morita equivalent; nevertheless the Fourier transform $\FF\colon A_{\GG_1} \to A_{\GG_2}$ is an isomorphism and, therefore, a Morita equivalence. This shows that the convolution 2-functor forgets information. This is a feature rather than a bug in non-commutative geometry. 
\end{example}

\begin{example}
	The Heisenberg groupoid $[I/I\times I]$ is Morita equivalent to the trivial groupoid $[\bullet/\{1\}]$; therefore, the non-commutative matrix algebra $\mathrm{Mat}_{n\times n}(\complex)$ is Morita equivalent to the 1-dimensional commutative algebra $\complex$.
\end{example}

\section{Gelfand Duality}\label{GD}

Gelfand duality expresses the fact that it is the same thing to have spaces as it is to have commutative algebras:

\begin{theorem}[Gelfand Duality] 
	The categories $\Spaces$ of Hausdorff compact topological spaces and the opposite category to the category $\CommAlgebras$ of commutative $C^*$-algebras are equivalent. Given a topological space $X$, its corresponding algebra is the algebra $C(X)$ of continuous complex valued functions on $X$ with pointwise multiplication.
\end{theorem}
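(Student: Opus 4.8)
The plan is to construct the two functors and a pair of natural isomorphisms realizing the equivalence
\[
C\colon \Spaces^{\op} \longrightarrow \CommAlgebras, \qquad \mathrm{Spec}\colon \CommAlgebras \longrightarrow \Spaces^{\op},
\]
where $C(X)$ is the algebra of continuous $\complex$-valued functions on a compact Hausdorff space $X$ (with sup norm, pointwise operations and involution given by complex conjugation), and $\mathrm{Spec}(A)$ is the \emph{Gelfand spectrum} of a commutative unital $C^*$-algebra $A$: the set of nonzero algebra homomorphisms $\chi\colon A \to \complex$, equipped with the weak-$*$ topology inherited from the unit ball of the dual $A^*$. First I would check that these assignments are well-defined: $C(X)$ is indeed a commutative $C^*$-algebra (completeness is uniform convergence, the $C^*$-identity $\|f^*f\| = \|f\|^2$ is pointwise), and a continuous map $\varphi\colon X \to Y$ induces the pullback $\varphi^*\colon C(Y)\to C(X)$, $g \mapsto g\circ\varphi$, a unital $*$-homomorphism, contravariantly and functorially. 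Dually, every unital $*$-homomorphism $\psi\colon A\to B$ induces $\psi^*\colon \mathrm{Spec}(B)\to\mathrm{Spec}(A)$ by precomposition, which is weak-$*$ continuous. The Gelfand transform $\Gamma_A\colon A \to C(\mathrm{Spec}(A))$, $a\mapsto \hat a$ with $\hat a(\chi):=\chi(a)$, is the candidate for one natural isomorphism; the evaluation map $\varepsilon_X\colon X \to \mathrm{Spec}(C(X))$, $x \mapsto \mathrm{ev}_x$, is the candidate for the other.

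Next I would establish the two key compactness/separation facts. (1) $\mathrm{Spec}(A)$ is compact Hausdorff: by Banach--Alaoglu the unit ball of $A^*$ is weak-$*$ compact, and $\mathrm{Spec}(A)\cup\{0\}$ is a weak-$*$ closed subset of it (the defining equations $\chi(ab)=\chi(a)\chi(b)$, $\chi(1)=1$ are preserved under weak-$*$ limits), so $\mathrm{Spec}(A)$ is compact; Hausdorffness is automatic for the weak-$*$ topology. (2) $\varepsilon_X$ is a homeomorphism onto $\mathrm{Spec}(C(X))$: injectivity is Urysohn's lemma (points of a compact Hausdorff space are separated by continuous functions), continuity is the definition of the weak-$*$ topology, and surjectivity is the statement that every character of $C(X)$ is an evaluation --- this follows because the kernel of a character is a maximal ideal, hence (using compactness of $X$ and that proper ideals of $C(X)$ have common zeros) of the form $\{f : f(x_0)=0\}$. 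Since a continuous bijection from a compact space to a Hausdorff space is a homeomorphism, $\varepsilon_X$ is an isomorphism in $\Spaces$, and one checks naturality in $X$ directly.

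The heart of the proof --- and the step I expect to be the main obstacle --- is showing that the Gelfand transform $\Gamma_A$ is an isometric $*$-isomorphism onto $C(\mathrm{Spec}(A))$. This rests on spectral theory in a Banach algebra: for $a\in A$, the range of $\hat a$ equals the spectrum $\sigma(a)=\{\lambda : a-\lambda 1 \text{ not invertible}\}$, using that maximal ideals of a commutative unital Banach algebra are exactly the kernels of characters (the Gelfand--Mazur theorem: a unital Banach division algebra is $\complex$). Thus $\|\hat a\|_\infty = r(a)$, the spectral radius. To upgrade this to an \emph{isometry} one uses that $A$ is a $C^*$-algebra: for self-adjoint $a$ one proves $r(a)=\|a\|$ via the $C^*$-identity and the spectral radius formula $r(a)=\lim\|a^n\|^{1/n}$ (iterating $\|a^2\|=\|a\|^2$ on $a=a^*$), and then $\|\hat a\|_\infty^2 = \|\widehat{a^*a}\|_\infty = \|a^*a\| = \|a\|^2$ in general. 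That $\Gamma_A$ is a $*$-homomorphism requires showing $\chi(a^*)=\overline{\chi(a)}$ for every character, equivalently that characters send self-adjoint elements to real numbers --- again a spectral-theory argument, since $\sigma(a)\subseteq\real$ for self-adjoint $a$ in a $C^*$-algebra. Finally, $\Gamma_A$ has dense range by the Stone--Weierstrass theorem (the image is a $*$-subalgebra of $C(\mathrm{Spec}(A))$ that separates points and contains constants) and closed range since it is isometric, hence it is surjective. Naturality of $\Gamma$ in $A$ is then a routine diagram chase, and the two natural isomorphisms $\Gamma$ and $\varepsilon$ exhibit the claimed equivalence of categories.
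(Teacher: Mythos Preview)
Your proof is correct and is the standard textbook argument for Gelfand duality (Gelfand transform plus Stone--Weierstrass for one direction, Urysohn and the maximal-ideal/common-zero argument for the other). However, the paper does not actually prove this theorem: it is a survey article, and Gelfand duality is stated there as a classical result without proof. The only gesture toward the argument is the sentence following the theorem, which draws an analogy with affine algebraic geometry (recover $X$ from $\OO(X)$ via the maximal spectrum) and remarks that ``a similar but more delicate construction works'' for topological spaces. Your proposal is precisely that more delicate construction, carried out in full, so there is nothing to compare approaches against --- you have supplied what the paper deliberately omits.
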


\begin{remark}
Given a category $\mathcal{C}$, its opposite category $\mathcal{C}^\op$ has the same objects and the same arrows but $s$ and $t$ exchange roles so that, in $\mathcal{C}^\op$, we have that $s^\op=t$ and $t^\op=s$.
\end{remark}

In classical algebraic geometry, one starts with an affine variety $X$ and one produces a commutative algebra $\OO(X)$ by taking its regular functions. Then, one can go back to $X$ by taking the spectrum of maximal ideals of $\OO(X)$. A similar but more delicate construction works in the case of a topological space $X$.

\begin{remark} The category $\CommAlgebras$ is the same as the category $\CommAlgebras/{\sim_M}$ where we have inverted Morita equivalences as two commutative algebras are Morita equivalent iff they are isomorphic.
\end{remark}

We are finally ready to define non-commutative spaces.

\begin{definition}
	The category of non-commutative spaces $\NCSpaces$ is the opposite to the category $\NCAlgebras/{\sim_M}$ of possibly non-commutative algebras \emph{up to Morita equivalence}.
\end{definition}

This definition extends Gelfand duality into the non-commutative realm:

\begin{center}
	\begin{tikzcd}
	\Spaces\arrow[r, "\cong", leftrightarrow]\arrow[d, hook]&
	\CommAlgebras^\op \arrow[d, hook] \\
	\NCSpaces\arrow[r, "\cong", leftrightarrow]& (\NCAlgebras/{\sim_M})^\op
	\end{tikzcd}
\end{center}

Also, the convolution functor becomes a well defined functor:
$$\Stacks \longrightarrow \NCSpaces.$$

In fact, we have:

\begin{center}
	\begin{tikzcd}
	\Groupoids\arrow[r, "C", rightarrow]\arrow[d, twoheadrightarrow]&
	\NCAlgebras\arrow[d, twoheadrightarrow] \\
	\Stacks\arrow[r, "C", rightarrow]& \NCSpaces
	\end{tikzcd}
\end{center}
where $\Stacks\cong\Groupoids/{\sim_M}$ and  $\NCSpaces\cong\NCAlgebras/{\sim_M}$.

\section{Non-Commutative Topology}

Because the rational algebraic topology of a commutative space can be written in terms of its commutative algebra, this allows one to speak of non-commutative rational topology: all the concepts that generalize will depend only on the Morita equivalence class of a (possibly non-commutative) algebra. We will follow \cite{kontsevich2008xi} in this section.

So, we consider  $A$ to be a unital, associative, possibly non-commutative algebra. 

\begin{definition}
	The Hochschild
	complex $C_\bullet (A,A)$ of $A$ is a negatively graded complex (we wil have all differentials of degree $+1$):
	$$ \stackrel{\partial}{\longrightarrow} A\otimes A\otimes A \otimes A \stackrel{\partial}{\longrightarrow}
	A \otimes A \otimes A \stackrel{\partial}{\longrightarrow} A\otimes A \stackrel{\partial}{\longrightarrow} A,$$
	where $A^{\otimes k}$ lives on degree $-k+1$.
	The differential $\partial$ is given by
	$$\partial(a_0 \otimes \cdots \otimes a_n) = a_0 a_1 \otimes a_2 \otimes \cdots \otimes a_n - a_0 \otimes a_1 a_2 \otimes \cdots \otimes a_n $$
	$$+ \ldots + (-1)^{n-1} a_0 \otimes a_1 \otimes \cdots \otimes a_{n-1} a_n + (-1)^n a_n a_0 \otimes a_1 \otimes \cdots \otimes a_{n-1}.$$
	The terms of this formula are meant to be written cyclically:
	\begin{equation}
	\begin{array}{ccccccc} &  &  & a_0 &  &  &  \\
	&  & \otimes &  & \otimes &  &  \\  & a_n &  &  &  & a_1 &  \\  & \otimes
	&  &  &  &  \otimes & \\  & \vdots &  &  & &   \vdots  & \\ &  & \otimes & & \otimes &  & \\  & & & a_i & & &\end{array}
	\end{equation}
	for $a_0 \otimes \cdots \otimes a_n$. It is immediate to check that $\partial^2=0$. We write 
	$$HH(A,A):=\mathrm{Ker\ }\partial/\mathrm{Im\ }\partial.$$
\end{definition}

We can interpret the homology of the Hochschild complex in terms of homological algebra:
$$ HH(A,A) = \mathrm{Tor}_\bullet^{A\otimes_k A^{\mathrm{op}} - \mathrm{mod}}(A,A).$$

It is an idea of A. Connes that, in non-commutative geometry, the Hochschild homology of $A$ can be interpreted as the complex of differential forms:

\begin{theorem}[Hochschild-Konstant-Rosenberg, 1961, \cite{HKR}] Let $X$ be a smooth affine algebraic  variety, then if $A=\OO(X)$,
	we have: $$HH_i(X):=H^{-i}(C_\bullet(A,A);\partial) \cong \Omega^i(X)$$ where $\Omega^i(X)$ is the space of $i$-forms on $X$.
\end{theorem}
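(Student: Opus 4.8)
The plan is to prove the Hochschild--Kostant--Rosenberg isomorphism by constructing an explicit, natural map from the algebra of differential forms into Hochschild homology and then verifying it is an isomorphism by a local computation. First I would define the antisymmetrization map $\varepsilon_i\colon \Omega^i(X) \to HH_i(X)$ on generators by
\begin{equation}
f_0\, df_1 \wedge \cdots \wedge df_i \longmapsto \sum_{\sigma \in S_i} \operatorname{sgn}(\sigma)\, f_0 \otimes f_{\sigma(1)} \otimes \cdots \otimes f_{\sigma(i)},
\end{equation}
and check that this lands in the kernel of $\partial$ and descends to homology; the cyclic bracketing in the definition of $\partial$ makes the boundary terms cancel in pairs precisely because of the alternating sign. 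Dually one has the map sending a Hochschild chain $a_0 \otimes \cdots \otimes a_i$ to $\tfrac{1}{i!} a_0\, da_1 \wedge \cdots \wedge da_i$, and one checks these are mutually inverse on the level of forms.

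The heart of the argument is to show $\varepsilon_\bullet$ is a quasi-isomorphism, and here I would reduce to the local/affine-space model. Since $X$ is smooth, both sides are well behaved under localization: Hochschild homology commutes with localization because $A \otimes_k A^{\mathrm{op}}$-projective resolutions localize, and $\Omega^\bullet$ is a coherent sheaf, so it suffices to prove the isomorphism after passing to a Zariski cover, hence (shrinking further and using \'etale coordinates) for a localization of a polynomial ring $A = k[x_1,\dots,x_n]$. For the polynomial ring one computes $HH_\bullet(A,A)$ directly using the Koszul resolution of $A$ as an $A\otimes_k A$-module: the Koszul complex built on the regular sequence $x_j \otimes 1 - 1 \otimes x_j$ is a free resolution of $A$, and tensoring over $A \otimes_k A$ with $A$ yields a complex with \emph{zero} differentials, whose homology is therefore $A \otimes \Lambda^\bullet(k^n)$, which is exactly $\Omega^\bullet_{A/k}$. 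Tracking the comparison map from the Koszul resolution to the bar resolution shows this identification agrees with $\varepsilon_\bullet$ up to the normalizing factorials, so the antisymmetrization map is the isomorphism in the polynomial case.

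To finish, I would invoke naturality: the map $\varepsilon_\bullet$ is defined for any commutative algebra and is compatible with algebra homomorphisms and with localization, and both $HH_\bullet$ and $\Omega^\bullet$ form sheaves on $X$ (the first by the localization property just mentioned, via Zariski descent). Since a morphism of sheaves that is an isomorphism on a basis of opens is an isomorphism, and every point of the smooth variety $X$ has an affine neighbourhood that is \'etale over affine space where we have just verified the claim, the global statement $HH_i(X) \cong \Omega^i(X)$ follows. The main obstacle I expect is the bookkeeping in the chain-level comparison between the Koszul and bar resolutions — getting the signs and the $1/i!$ normalization to match so that the abstract homology isomorphism is literally implemented by the antisymmetrization map — together with the mild subtlety that in the $C^\infty$ or topological $C^*$-setting one must either restrict to the algebraic statement as in the theorem hypothesis or replace $\otimes$ by a suitably completed tensor product; I would state the result for smooth \emph{affine algebraic} varieties exactly as phrased, sidestepping the analytic completion issues.
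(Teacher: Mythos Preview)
Your argument is correct and is essentially the same approach as the paper's: both interpret $HH_\bullet(A,A)$ as $\mathrm{Tor}^{A\otimes A^{\mathrm{op}}}_\bullet(A,A)$ and reduce to a local computation, which you carry out explicitly via the Koszul resolution on the regular sequence $x_j\otimes 1-1\otimes x_j$. The paper's two-line sketch phrases the same idea geometrically---the diagonal $\Delta\colon X\hookrightarrow X\times X$ has normal bundle equal to $TX$, so $\mathrm{Tor}_\bullet(\mathcal{O}_\Delta,\mathcal{O}_\Delta)$ is the exterior algebra on the conormal bundle, i.e.\ $\Omega^\bullet(X)$---and then simply says ``a local calculation finishes the proof''; your Koszul computation and localization/gluing are exactly that local calculation made explicit, with the added bonus that you identify the isomorphism concretely as the antisymmetrization map.
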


\begin{proof}
Write the diagonal embedding $X\stackrel{\Delta}{\longrightarrow} X \times X$ and, because the normal bundle of $\Delta$ is the tangent bundle of $X$, we have:
$$HH_\bullet(X) = \mathrm{Tor}_\bullet^{{\mathrm{Quasi-coherent}}(X\times X)} (\OO_\Delta,\OO_\Delta).$$ A local calculation finishes the proof.
\end{proof}

The Hochschild-Konstant-Rosenberg theorem allows us to interpret $HH_i(A)$ as the space of differential forms of degree $i$ on a non-commutative space.

Whenever $A$ is non-commutative, we have:
$$H^0(C_\bullet(A,A);\partial)=A/[A,A].$$ 

In the commutative case
$A=\OO(X)$,  to an element $a_0 \otimes\cdots \otimes a_n$ in
$C_\bullet (A,A),$ the corresponding differential form is: $\frac{1}{n!}
a_0 da_1\wedge \ldots \wedge da_n.$

It is convenient to mention a reduced version of the complex $C^\redu_\bullet(A,A)$ that computes the same cohomology; it is obtained by reducing modulo
constants all terms but the first: $$\longrightarrow A \otimes A/({\bf k}\cdot 1) \otimes A/({\bf k}\cdot 1)
\longrightarrow A \otimes A/({\bf k}\cdot 1) \longrightarrow A.$$

Alain Connes' observed that we can write a formula for an additional differential
$B$ on $C_\bullet(A,A)$ of degree $-1$, inducing a differential on
$HH_\bullet(A)$ that is meant to be the de Rham differential:
$$B(a_0\otimes a_1 \otimes \cdots \otimes a_n) = \sum_\sigma (-1)^\sigma 1 \otimes a_{\sigma(0)}
\otimes \cdots \otimes a_{\sigma(n)}$$
where $\sigma\in \integer/(n+1)\integer$ runs over all cyclic permutations. It not hard to see
that: $$B^2=0,\ \ \ B\partial +\partial B = 0, \ \ \  \partial^2=0,$$ this we write as:
$$\xymatrix{
	\cdots \ar@/_/[rr]_\partial && A \otimes A/1 \otimes A/1 \ar@/_/[ll]_B \ar@/_/[rr]_\partial &&
	A \otimes A/1 \ar@/_/[ll]_B \ar@/_/[rr]_\partial && A \ar@/_/[ll]_B
}$$
and by computing the cohomology, this gives us a complex $(\Ker \partial / \Img
\partial ; B)$. A naive definition on the de Rham cohomology in
this context is the homology of this complex $\Ker B / \Img B.$

We can improve this by considering the negative cyclic complex $C^-_\bullet(A)$, which is a
projective limit (here $u$ is just a formal variable of degree $\deg(u)=+2$):
$$C_\bullet ^-:= (C_\bullet^\redu(A,A)[[u]] ; \partial + u B) =
\lim_{\stackrel{\longleftarrow}{N}}(C_\bullet^\redu(A,A)[u]/u^N;\partial + u B).$$

\begin{definition} The \emph{periodic complex} is defined as the inductive limit:
$$C_\bullet ^\per:= (C_\bullet^\redu(A,A)((u)) ;
\partial + u B) = \lim_{\stackrel{\longrightarrow}{i}}(u^{-i} C_\bullet^\redu(A,A)[[u]];\partial + u B).$$
\end{definition}

It is a ${\bf k}((u))$-module, and this implies that
 multiplication by $u$ induces a kind of Bott periodicity. The
resulting cohomology groups called (even, odd) periodic cyclic
homology and are written (respectively): $$HP_\even(A), \ \ \ \ \
HP_\odd(A).$$ This is the desired replacement for de Rham
cohomology.

For example, when $A=C^\infty(X)$ is considered with its nuclear Fr\'echet algebra strcuture,
and taking $\otimes$ to be the topological tensor product, then we obtain the canonical isomorphisms:
$$ HP_\even(A) \cong H^0(X,\complex) \oplus H^2(X,\complex) \oplus \cdots $$
$$ HP_\odd(A) \cong H^1(X,\complex) \oplus H^3(X,\complex) \oplus \cdots $$
\begin{theorem}[Connes, \cite{Connes85}, cf. Feigin-Tsygan, \cite{Tsygan}] If $X$ is a possibly singular affine algebraic variety
	and $X_\topo$ is its underlying topological space then:
	$$HP_\even(A) \cong H^\even(X_\topo,\complex)$$ and $$HP_\odd(A) \cong H^\odd(X_\topo,\complex)$$
	and these homologies
	are finite-dimensional.
\end{theorem}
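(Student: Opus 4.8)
The plan is to reduce the singular affine case to the smooth case treated by Hochschild–Kostant–Rosenberg plus Connes' identification of periodic cyclic homology with de Rham cohomology, and then to descend along a resolution. First I would recall the two facts already available to us: (i) for $A=\OO(X)$ with $X$ smooth affine, $HH_i(A)\cong\Omega^i(X)$, and combining this with the $B$-operator calculation, $HP_\even(A)$ and $HP_\odd(A)$ compute the (algebraic) de Rham cohomology of $X$, which by Grothendieck's theorem agrees with $H^\even(X_\topo,\complex)$ and $H^\odd(X_\topo,\complex)$; and (ii) periodic cyclic homology is a functor on algebras that is invariant under the relevant notion of equivalence. So the smooth case is essentially done; the content is the passage to singular $X$.

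The key step is to present $A=\OO(X)$ as a quotient $A=R/I$ of a smooth (polynomial) algebra $R=\mathbf{k}[x_1,\dots,x_N]$, i.e.\ to choose a closed embedding $X\hookrightarrow \mathbb{A}^N$, and then to use the (co)simplicial / derived machinery of Feigin–Tsygan: one replaces $A$ by a cofibrant (free, or smooth DG) resolution $A_\bullet\xrightarrow{\sim}A$ as a DG algebra, computes $HP_\bullet(A_\bullet)$ levelwise (where HKR applies, since each $A_n$ is smooth), and shows the spectral sequence of the resolution degenerates onto $H^\bullet_{\mathrm{dR}}(X)$. Equivalently, in the language the excerpt prefers, one uses that $HP_\bullet(A)=\mathrm{Tor}$-type invariants of $\OO_\Delta$ on $X\times X$ and applies the HKR local calculation after resolving $\OO_\Delta$ — the derived self-intersection now carries the cotangent complex $L_X$ of the singular $X$ in place of $\Omega^1_X$, and one must check that the extra differential $B$ still collapses the resulting (now unbounded) complex onto topological cohomology. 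The comparison with the topological space $X_\topo$ is then Grothendieck's algebraic de Rham theorem, extended to singular varieties via the hypercohomology of the de Rham complex of a resolution, or equivalently via Deligne's theory; this is where one invokes that we work over $\complex$ and that $HP$ of a $\complex$-algebra only sees the underlying analytic/topological space.

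Finite-dimensionality I would obtain last, as a corollary of the construction: $X_\topo$ is the complex points of a finite-type scheme, hence a finite CW complex (triangulable, by Łojasiewicz/Hironaka), so each $H^i(X_\topo,\complex)$ is finite-dimensional and vanishes above the real dimension; since $HP_\even$ and $HP_\odd$ are the even/odd sums of these, they are finite-dimensional. The main obstacle, as always in the singular HKR story, is controlling the convergence and degeneration of the two spectral sequences once one leaves the smooth world: the Hochschild complex of a singular $A$ is genuinely infinite in each total degree (the cotangent complex $L_{A/\mathbf{k}}$ has cohomology in unboundedly many degrees), so one must argue that the formal variable $u$ of degree $+2$, i.e.\ the honest periodicity of $HP$, truncates this back down — this is precisely the delicate input of Feigin–Tsygan and the reason Connes' theorem is nontrivial beyond the smooth case. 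I would isolate this as the one step deserving a careful $u$-adic filtration argument, and treat everything else (the embedding, the local HKR computation, Grothendieck's comparison) as cited or routine.
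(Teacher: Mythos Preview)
The paper does not give a proof of this theorem: it is stated as a cited result, attributed to Connes \cite{Connes85} and Feigin--Tsygan \cite{Tsygan}, with no argument supplied (the only proof in that section is the two-line sketch for the preceding Hochschild--Kostant--Rosenberg theorem). So there is nothing in the paper to compare your proposal against.

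That said, your outline is a fair summary of the Feigin--Tsygan strategy: reduce to the smooth case via HKR and Grothendieck's comparison theorem, then handle the singular case by replacing $A$ with a smooth (free) DG resolution and identifying the resulting periodic cyclic homology with Hartshorne's algebraic de Rham cohomology of a singular variety, which in turn agrees with $H^\bullet(X_\topo,\complex)$. Your identification of the delicate point --- that for singular $A$ the Hochschild side is genuinely unbounded and one must use the $u$-periodicity to control convergence --- is exactly where the work lies. The finite-dimensionality argument via triangulability of complex affine varieties is correct and standard. If anything, you could be more explicit that the comparison on the singular side is with \emph{Hartshorne's} algebraic de Rham cohomology (hypercohomology of the completion along a smooth embedding), which is the form in which Grothendieck's theorem extends; your phrasing ``via Deligne's theory'' is a bit oblique for this.
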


As expected, whenever $A$ is Morita equivalent to $B$, then $HP_\bullet(A) \cong HP_\bullet(B)$; in other words, $HP_\bullet(A)$ only depends on the non-commutative space represented by $A$. 

\section{The Non-Commutative Torus}

The most basic and classical example of a non-commutative space is the non-commutative torus. It can be obtained as the convolution algebra of an \'{e}tale groupoid.

The quantum 2-torus $\TTT_{\hbar}^2\in\NCSpaces\cong\NCAlgebras/{\sim_M}$ corresponds under Gelfand duality to the algebra $A_\hbar$ generated by two (periodic) generators $X$, $Y$ that don't  commute but rather satisfy the relation:
$$XY = e^{2\pi i\hbar} YX.$$
This relation is precisely the Weyl exponential form of the commutation relation appearing in the work of Heisenberg, Born and Jordan that we mentioned before \cite{born1925quantenmechanik}.

The algebra $A_\hbar$ is only truly non-commutative when $\hbar$ is irrational; When $\hbar$ is rational, while $A_\hbar$ is non-commutative on the nose (except for $\hbar=0$), it is, in reality, Morita equivalent to the commutative algebra of an ordinary torus ($XY=YX$).

\begin{theorem}[Alain Connes \cite{connes1980c}, cf. Marc Rieffel, \cite{Rieffel}] 
$A_\hbar$ is Morita equivalent to $A_{\hbar'}$
if and only if: $$\hbar' = \frac{a \hbar +b}{c \hbar + d},\ \ \ \
\left(\begin{array}{cc}a & b \\c & d\end{array}\right)\in \mathrm{SL}_2(\integer).$$
\end{theorem}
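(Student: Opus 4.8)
The plan is to prove the two implications with entirely different tools: the ``if'' direction by writing down explicit bimodules, and the ``only if'' direction by extracting a Morita invariant sharp enough to recover $\hbar$ up to the action in the statement.

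For ``if'', it is enough to realize the two standard generators of $\mathrm{SL}_2(\integer)$, namely $T=\left(\begin{smallmatrix}1 & 1\\ 0 & 1\end{smallmatrix}\right)$, acting by $\hbar\mapsto\hbar+1$, and $S=\left(\begin{smallmatrix}0 & -1\\ 1 & 0\end{smallmatrix}\right)$, acting by $\hbar\mapsto-1/\hbar$; for a general matrix one writes it as a word in $S$ and $T$ and composes the corresponding bimodules with $\otimes$, using that Morita equivalence is transitive. The case $\hbar\mapsto\hbar+1$ is free, since $e^{2\pi i(\hbar+1)}=e^{2\pi i\hbar}$ and hence $A_{\hbar+1}=A_\hbar$ literally. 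The real content is $\hbar\mapsto-1/\hbar$, handled by Rieffel's Heisenberg bimodule: take the Schwartz space $\mathcal{S}(\real)$, let the generators of $A_\hbar$ act on the left by the translation $(U\xi)(t)=\xi(t-\hbar)$ and the modulation $(V\xi)(t)=e^{2\pi i t}\xi(t)$, and let the generators of $A_{-1/\hbar}$ act on the right by the analogous operators with step rescaled by $1/\hbar$. One then equips $\mathcal{S}(\real)$ with an $A_\hbar$-valued and an $A_{-1/\hbar}$-valued inner product, checks that each is full and that they satisfy the compatibility identity $\langle\xi,\eta\rangle_{A_\hbar}\,\zeta=\xi\,\langle\eta,\zeta\rangle_{A_{-1/\hbar}}$, and concludes that the completion of $\mathcal{S}(\real)$ is an $A_\hbar$--$A_{-1/\hbar}$ imprimitivity bimodule, i.e.\ a Morita equivalence in the sense defined above.

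For ``only if'' the invariant is the ordered group $K_0$ together with the canonical trace. First, $A_\hbar$ is a crossed product of $C(\T)$ by the $\integer$-action generated by rotation by $\hbar$, so the Pimsner--Voiculescu exact sequence gives $K_0(A_\hbar)\cong\integer^2$; the normalized trace $\tau$ on $A_\hbar$ induces $\tau_*\colon K_0(A_\hbar)\to\real$ with image the subgroup $\integer+\integer\hbar$, and for $\hbar$ irrational the order on $K_0(A_\hbar)$ is exactly the one pulled back along $\tau_*$ from $\real$. A Morita equivalence $A_\hbar\sim A_{\hbar'}$ transports finitely generated projective modules and hence induces an isomorphism of ordered groups $K_0(A_\hbar)\cong K_0(A_{\hbar'})$ which rescales the traces by a positive constant $\lambda$ (the ``Rieffel dimension'' of the bimodule); comparing the images in $\real$ gives $\lambda(\integer+\integer\hbar)=\integer+\integer\hbar'$ as ordered subgroups. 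Unwinding the corresponding change of integral basis of $\integer^2$ produces a fractional linear relation $\hbar'=\frac{a\hbar+b}{c\hbar+d}$ with integer coefficients and $ad-bc=\pm1$, which, after tracking the sign of $\lambda$, is the $\mathrm{SL}_2(\integer)$-action in the statement. (For $\hbar$ rational the argument degenerates and one instead uses that $A_\hbar$ is Morita equivalent to $C(\T^2)$ precisely when the orbit contains an integer.)

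The main obstacle is the ``if'' direction, and specifically the verification that the Heisenberg module is a genuine imprimitivity bimodule: one must write the two operator-valued inner products explicitly in terms of translations and modulations, prove fullness of each (so that neither algebra is ``too small'' to see the whole module), and check the associativity identity above — the Poisson-summation-flavored identity that simultaneously makes the two actions commute and the two pairings compatible. There is also the analytic passage from the algebraic, Schwartz-function picture to the $C^*$-completions, which I am otherwise suppressing. A secondary subtlety, on the ``only if'' side, is the bookkeeping needed to pin the answer to $\mathrm{SL}_2(\integer)$ rather than merely to a fractional linear transformation with determinant $\pm1$: this is where the orientation on $K_0$ and the sign of the dimension constant $\lambda$ must be handled with care.
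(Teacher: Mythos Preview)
The paper does not actually prove this theorem: it is stated with attribution to Connes \cite{connes1980c} and Rieffel \cite{Rieffel} and then the exposition moves on to the periodic cyclic homology of $A_\hbar$. There is therefore no in-paper proof to compare your proposal against.

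That said, your outline is essentially the classical argument from the cited references. The ``if'' direction via the Heisenberg module on $\mathcal{S}(\real)$ is Rieffel's construction, and reducing to the generators $S$ and $T$ of $\mathrm{SL}_2(\integer)$ is the standard way to organize it. The ``only if'' direction via the ordered $K_0$-group together with the range of the trace, computed through the Pimsner--Voiculescu sequence, is likewise the standard invariant that distinguishes the $A_\hbar$ up to Morita equivalence. Your identification of the two genuine difficulties --- fullness and the compatibility identity for the Heisenberg bimodule, and the $\pm1$ determinant bookkeeping at the end --- is accurate; the latter is exactly where one sees whether the answer is $\mathrm{GL}_2(\integer)$ or $\mathrm{SL}_2(\integer)$, and in fact Rieffel's original statement is for $\mathrm{GL}_2(\integer)$, the discrepancy being absorbed by the $*$-isomorphism $A_\hbar\cong A_{-\hbar}$.
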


One can also prove that: 
$$ HP_\even(A_\hbar)=H^0(T^2) \oplus H^2(T^2), $$ 
and
$$HP_\odd(A_\hbar)= H^1(T^2,\complex).$$

It is a beautiful discovery of Connes \cite{connes2008walk} that the non-commutative torus can be thought as the non-commutative space that models the space of leaves of the Kronecker foliation. The universal covering of the classical torus is the Euclidean plane, by taking the foliation of all lines of slope $\hbar$ on the plane and projecting it into the torus by the covering map, we obtain the Kronecker foliation of slope $\hbar$ on $T^2$ (Figure \ref{Kronecker}). By taking a vertical transversal circle to the foliation, it is easy to see that the holonomy groupoid of this foliation is $[S^1 /\langle \rho_\hbar \rangle]$ where $\rho_\hbar$ acts on $S^1$ by a rotation of angle $\hbar$ (cf. the think line in Figure \ref{Kronecker}). In section 6 of \cite{connes2008walk}, it is shown that the convolution algebra of $[S^1 /\langle \rho_\hbar \rangle]$ is $A_\hbar$ (it is a nice exercise using Fourier series that the interested reader may try).

\begin{figure}[h]
	\centering
	\includegraphics[width=0.7\linewidth]{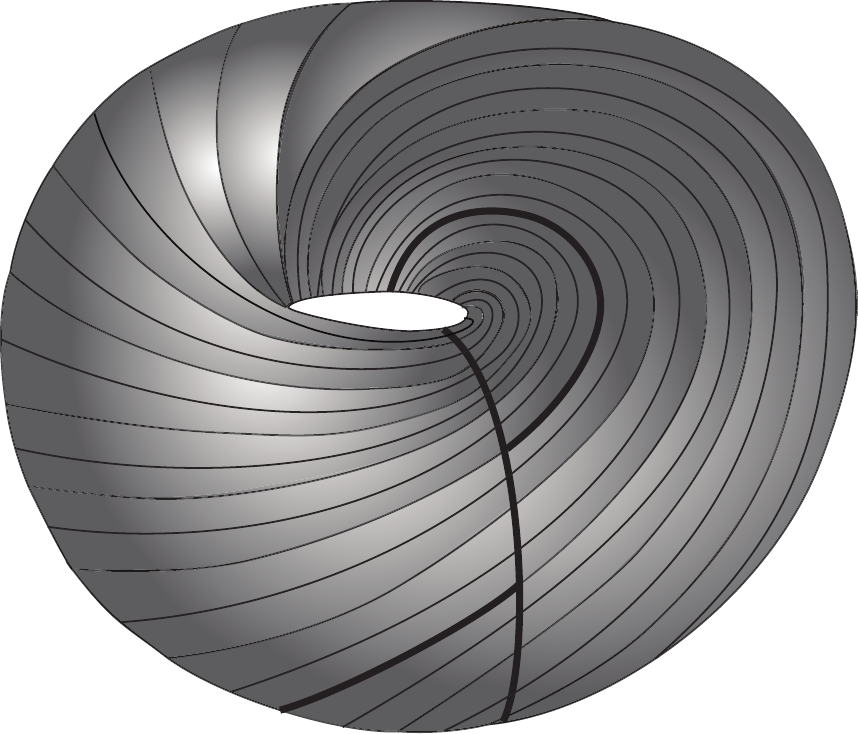}
	\caption{The non-commutative torus is the convolution algebra of the holonomy groupoid of the Kronecker foliation}\label{Kronecker}
\end{figure}

\section{Non-Commutative Toric Geometry}

Classical $n$-complex dimensional compact, projective K\"ahler toric manifolds $X$ are defined as equivariant, projective compactifications of the $n$-complex dimensional torus $\T_\complex^n:= \complex^* \times \cdots \times \C^*$:
$$X:=\overline{\T_\complex^n}.$$
An interesting question, even from a classical point of view, would be: How to define a meaningful moduli space of toric manifolds? The main problem being that toric manifolds are rigid as equivariant objects. Non-commutative geometry helps elucidate this question in a surprising beautiful way.

Let us recall first the basic facts about the \emph{moment map} defined ona toric manifold. The K\"ahler manifold $(X,g,J,\omega)$ is a symplectic manifold (forget $g$ and $J$), and the action of the real torus $\T_\R^d$ on $X$ is Hamiltonian. Therefore, we have an equivariant moment map $\mu$ with convex image $P$.
$$\mu \colon X \longrightarrow  P \subset \R^d \cong \mathrm{Lie Algebra}({\T_\R^d})^*.$$

For a toric variety $X$, $P$ happens to be a convex, rational, Delzant polytope: in other words, the combinatorial dual of $P$ is a triangulation  of the sphere $S^{d-1}$, and all the slopes of all the edges of $P$ are rational. By taking cones over the origin of the dual to the polytope, we get the \emph{fan} associated to the toric manifold \cite{Audin} (see Figure \ref{Moment}). Both $P$ and the fan live in $\rational^n$.

\begin{figure}[h]
	\centering
	\includegraphics[width=0.7\linewidth]{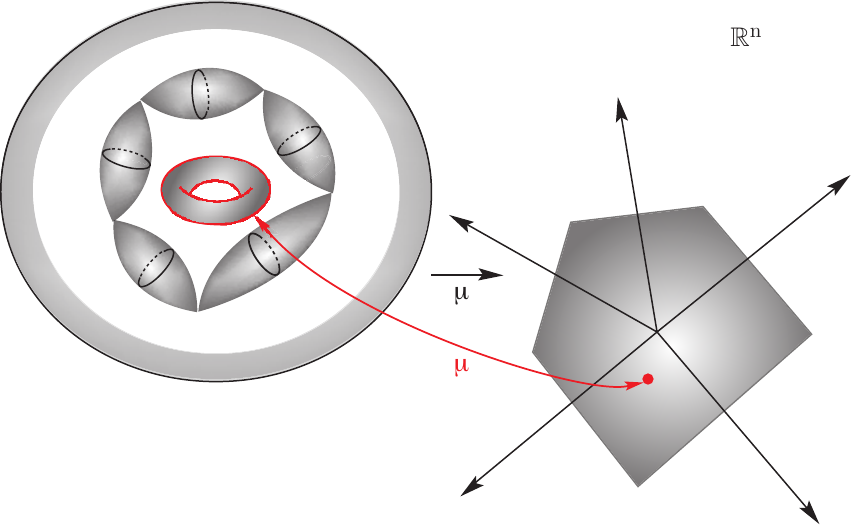}
	\caption{The moment map for a toric manifold: the inverse image of every point is a real torus of dimension equal to the dimension of the stratum of $P$ where the point lands. The inverse image of edges are spheres made up of 1-tori (circles). In non.commutative toric geometry all tori and circles are replaced by their non-commutative counterparts}\label{Moment}
\end{figure}

In \cite{katzarkov2020quantum}, classical toric geometry is generalized: \emph{by replacing all the classical tori in toric geometry for non-commutative tori, one can obtain non-commutative toric varieties}. Now the (possibly irrational) fan (or possibly irrational polytope) no longer lives in $\rational^n$, but rather  lives in (a possibly irrational) quantum lattice $\Gamma \subset \real^n$ ($\Gamma$ is a finitely generated possibly dense Abelian subgroup of $\real^n$ as it may have more than $n$ generators over $\integer$).

Then, a moduli space of toric varieties $\mathcal{M}$ can be defined (fixing the combinatorics of the polytope or fan). In  a large family of favorable cases the moduli space $\mathcal{M}$ is a complex orbifold: its rational points are precisely the classical toric varieties, and \emph{its irrational points are precisely the truly non-commutative toric varieties}. Non-commutative geometry is precisely what is needed to define a nice moduli space of toric varieties.

Just as classical toric geometry has been used in the solution of multiple problems in geometry, physics and combinatorics, non-commutative toric geometry allows many of these solutions to generalize to wider settings.

\section{Further Directions}

In this note we have hardly made justice to the richness of non-commutative geometry nor to some of its most exciting recent developments. We will mention briefly some of this beautiful mathematics; We do this by mostly following two recent excellent survey papers by Connes: \cite{ConnesEssayRH} and \cite{connes2019noncommutative}.

\subsection{Non-commutative manifolds} A classical (spin) Riemannian manifold can be seen in at least four different lights (by applying various forgetful functors): as a measure space, as a topological space, as a smooth manifold, and as a fully fledged Riemannian manifold. So far we have remained mostly at the first three levels but we have not mentioned the metrical aspect at all. In this subsection we mostly mark some pointers for the reader to fill this gap.

Connes isolated the correct definition of ``non-commutative Riemannian manifold" (and of ``non-commutative manifold") in his notion of \emph{spectral triple}. A \emph{spectral triple} $(\AAA,\HH,\DD)$ consists of an algebra $\AAA$ of operators in the Hilbert space $\HH$ (just as in Gelfand duality, when commutative, $\AAA$ is meant to be the algebra of functions of a geometric space as in section \ref{GD}), together with a (not necessarily bounded) self-adjoint operator $\DD$, acting on $\HH$ playing the role of the  inverse line element coming form the metric. 

The classical case of a spin, compact, Riemmanian manifold, $M$, can be realized as a spectral triple $(\AAA,\HH,\DD)$ by setting $\AAA$ to be the algebra of functions on $M$ acting in the Hilbert space $\HH$ of $L^2$-spinors and by letting $\DD$ be the Dirac operator. In this case, the metric on $M$ can be recovered by the following formula:
$$d(a,b)= \sup_{\{f: ||[f,\DD]||\leq 1\} } |f(a)-f(b)|.$$
Here the word ``spectral" plays two roles: the first is (at the topological level) given by Gelfand's duality -- from $\AAA$ we can recover a locally compact space --, the second role is at the geometric level, for we reconstruct the geometry from the spectrum of the Dirac operator $\DD$ (as in the motto ``can you hear the shape of a drum?") together with the interaction between the algebra $\AAA$ and the set of functions $\{f(\DD)\}$ of $\DD$. It is a remarkable fact that this works for fractal, discrete, and arc-disconnected spaces.

In the non-commutative case, replacing in the above formula $f(a)-f(b)$ by $\phi(f)-\psi(f)$ computes the distance between two states (positive linear forms) $\phi,\psi\colon \AAA\to\complex$ thus inducing a metric on the space of such states. The notion of spectral triple is an ambitious generalization on the notion of (spin) Riemmanian manifold and we refer the reader to section 2 of \cite{connes2019noncommutative} an references therein for a quick updated tour of the concept and to chapter 6 of \cite{Connes94} for a more detailed classical account. A highlight of this theory is the deep ``reconstruction theorem" of Connes (page 19 of \cite{connes2019noncommutative} and references therein) which roughly characterizes classical smooth manifolds as commutative spectral triples satisfying some additional axioms. 

Let us end this section pointing out that the spectral approach to geometry afforded by non-commutative geometry can encode simultaneously macroscopic and microscopic phenomena (the line element includes the information of all the bounding forces known so far, as in the standard model).

\subsection{The standard model of particle physics} Just as non-commutative geometry can be motivated by the most basic ideas in quantum mechanics as we did in Section \ref{MM} above, along the same lines of reasoning, non-commutative geometry can illuminate, for example, why the gauge group for the Standard Model of particles and forces is $SU(3)\times SU(2) \times U(1)$. The non-commutative geometry approach to the Standard Model was developed by Connes and his collaborators in the 90s (starting with his paper with J. Lott in \emph{Nuclear Physics B}). The basic idea is that, at the classical level, replacing the geometry of the continuum, $M$, by the (slightly) non-commutative space $M\times F$, where $F$ is a finite geometry. This addition transforms the Lagrangian of quantum electrodynamics into the Lagranginan of the Standard Model. For a lucid explanation of this bottom-up approach to the non-commutative geometrical exegesis of the Standard Model, we refer the reader to Chapter 6, Section 5 of \cite{Connes94}. It is worth mentioning here that, more recently, Chaseddine, Connes and Mokhanov have developed a more top-bottom approach to this theory with their simultanous quantization of both the fundamental class in $K$-homology and in $K$-theory. We refer the reader to \cite{connes2019noncommutative} page 4 and references therein for details on this point of view on the non-standard model.

\subsection{The Riemann Hypothesis} Connes together with his collaborators (specially Consani and Marcolli) has developed a very ambitious program to understand the meaning of the Riemann Hypothesis (RH from now on) that while deeply connected to non-commutative geometry, really is more a ``unified theory of all of mathematics" of sorts: it is more than an attack on the RH, as the RH becomes a theme in a monumental mathematical symphony. This development is undoubtedly one of the most exciting developments of 21st century mathematics. In this brief section, we barely make justice to this field by following closely the recent survey \cite{ConnesEssayRH} and describing it in a colloquial impressionistic manner (space would not allow otherwise).  We urge the interested reader to go directly to \cite{ConnesEssayRH}.

Let us recall the reader that the Riemann Hypothesis has a deceptively simple statement. First we need to define the (complex valued) Euler-Riemann zeta function $\zeta(s)$ (of one complex variable) to be the analytic continuation of 
$$ \zeta(s) = \sum_{n=1}^\infty \frac{1}{n^s},$$
(which converges only for $|s|>1$). Euler proved (\emph{Variae observationes circa series infinitas}, 1737) that for $|s|>1$:
$$ \zeta(s) = \sum_{n=1}^\infty \frac{1}{n^s} = \prod_{p} \frac{1}{1-p^{-s}},$$
where $p$ runs through all positive integer primes, thus establishing a deep connection between $\zeta(s)$ and number theory. 

We are interested in the zeroes of $\zeta(s)$. Clearly all negative even integers satisfy $\zeta(-2m)=0$. Such zeroes are called trivial zeroes of $\zeta(s)$

The RH states that \emph{every non-trivial zero $\rho$ of $\zeta(s)$ has real part equal to $\frac{1}{2}$.} As of writing this, the RH remains one of the deepest most beautiful open problems in mathematics. 
 
One possible way to organize the story of the relation of the RH and non-commutative geometry  is around the so-called explicit formul\ae, the first of which is due to Riemann:
$$\sum_n \frac{1}{n} \pi(x^{\frac{1}{n}}) = \sum_{\rho} \Li(x^\rho) + \int_{s}^{\infty} \frac{1}{t^2-1} \frac{dt}{t\log t} - \log 2.$$
Here, the left-hand side involves the primes (for $\pi(x)$ is the function that counts all primes less than $x$), while the right-hand side involves the non-trivial zeroes $\rho$ of $\zeta(s)$. The formula uses the integral logarithm $\Li(x):=\int_{0}^{x} \frac{dt}{\log t}$. 

This amazing formula establishes a sort of duality between the primes and the zeroes of $\zeta(s)$, and it begs for a geometric interpretation: Connes program could be construed as an attempt to build up the geometric setting for such an interpretation: rather than trying to  frontally assault the RH, Connes decides to take a detour in order to find the right language where to state it; and this means, in this context, finding the right geometric objects where trace formulas in the spectral geometry of the adequate space produce the explicit formula. 

This program seems to take seriously the lesson of Deligne's proof of RH in characteristic $p$, Stepanov and Bombieri found a much more elementary proof of the same theorem in 1974, but lots of beautiful mathematics would never have arisen if this more elmentary proof would have been found in the 1930s (as James Milne likes to point out). A large part of the beauty of Deligne's proof consists in the generalizations of geometry that it required (e.g: scheme, topos), many of them due to Grothendieck.   

Thus, the program could be though of as consisting of four stepping stones: algebraic geometry, trace formulas in spectral geometry, Riemann-Roch formul\ae, and absolute algebra (homotopy theory).

The first paradigm shift in the approches to the RH came in the work of Weil who, in 1940, proved the RH for curves over finite fields: the usual RH corresponds to the field $\rational$ which is not finite. Given a curve $C$ (you can think of a Riemann surface but over a finite field), there is an analogue $\zeta_C(s)=Z(C,q^{-s})$ of the classical $\zeta(s)$ and the explicit formula in this case can be written in terms of the geometry of the space $$Y_\field := \adeles_\field/\field.$$
Here $\adeles_\field$ is the ring of adeles of $\field$ (defined as the restricted product of the local fields $\field_\nu$, in turn defined as the completions of $\field$ at all its different places), and $\field$ is the global field of functions on $C$. Weil's explicit formula in terms of the geometry of $Y_\field$ makes sense for every global field: Riemann's classical explicit formula becomes a special case of Weil's explicit formula when interpreted in the global field $\rational$. Weyl's theorem for a curve $C$ over a finite field $F_q$ becomes then a consequence of the Riemann-Roch theorem over $\bar{C}\times\bar{C}$ (here $\bar{C}$ is the curve obtained from $C$ by extension of scalars from $F_q$ to its algebraic closure $\bar{F}_q$) for his explicit formula for a global field in this case can be given a cohomological interpretation. Thus, the proof requires a good notion of intersection theory and cohomology in algebraic geometry as, in the end, it amounts to proving the negativity of the self-intersection pairing for divisors of degree zero. For a very clear and brief description of this argument, we refer the reader to Section 2.3 of \cite{ConnesEssayRH}. 

The second paradigm shift (from this geometric point of view) comes when Deligne generalized this result to all smooth projective varieties over a finite field, quite a remarkable result. In his proof, he used both, ideas by Landau on the very classical subject of Dirichlet series, as well as very modern ideas at the time from Grothendieck on schemes, and \'{e}tale cohomology. Again, the notion of space must be expanded (it is insufficient to consider the foundations of algebraic geometry as developed by Weil and one must use Grothedieck's point of view) to prove RH for a larger class of global fields. Connes project for the classical RH is more in line with Weil's method than with Deligne's: it requires even more general notions of what a space is. 

Connes suggests that a full understanding of the RH will require both the notion of non-commutative space and the concept of topos (due to Grothendieck) as the ever more general definitions of what a geometric space must mean. This would entail a third paradigm shift.

By definition, a topos is a specific type of category. The archetypal example of a topos is a Grothedieck topos which are, roughly speaking, categories of sheaves over geometric spaces (or sites): one is to model the general definition of a topos on the properties of such categories. But topoi enjoy of dual nature: geometric and logic. MacLane and Moerdijk start their introduction to topos theory by saying:
\blockquote{A startling aspect of topos theory is that it unifies two seemingly wholly distinct mathematical subjects: on the one hand, topology and algebraic geometry and, on the other hand, logic and set theory. Indeed, a topos can be considered both as a “generalized space” and as a “generalized universe of sets”.}
Vickers has proposed a useful dictionary to understand the dual nature of (localic) topoi:
\begin{center}
\begin{tabular}{ |c|c| } 
	\hline
	\textbf{Geometric Space} & \textbf{Logical Theory} \\ 
	Point & Model of the Theory  \\ 
	Open Set &  Propositional Formula \\ 
	Sheaf & Predicate Formula \\
	Continuous Map & Definable Transformation \\
	\hline
\end{tabular}
\end{center}

\ 

In topos theory is very easy to incorporate group (and groupoid) actions (by constructibility). While in ordinary geometry,

 a quotient may be ill behaved, topoi as model for spaces behave very well under quotients, just as non-commutative spaces do. Indeed, given a Lie groupoid, as in Section \ref{SectGpd} above, one can both associate a non-commutative space (the Morita class of the convolution algebra of the groupoid) and a topos: the equivariant sheaves over the space of objects. When a geometric space has an atlas given by a groupoid (it is stack-like), one can associate to it both a non-commutative space and a topos (cf. page 402 of \cite{Cartier2001}, where this is explained lucidly).

One additional ingredient on what Connes calls \emph{the Riemann-Roch strategy for the RH} (which would provide a method for applying Weil's method to the original RH) that must be mentioned here is \emph{tropical geometry}. Roughly speaking, tropical geometry is a kind of commutative algebraic geometry that lives not in a field but rather in a semi-field $(\trop,\oplus,\otimes)$. Here, as a set, $\trop:=\real_+=\{x:x\geq 0\}$, and the expressions:
$$x\oplus y := \max(x,y), \ \ \ \ x\otimes y := x+y,$$
define the operations of the semi-field. $\trop$ is a semi-field rather than a field because there is no additive inverses in general. Nevertheless, one can do geometry over this semi-field and in practice this becomes a combinatorial shadow of ordinary complex geometry (taking the place of the so-called \emph{geometry over the field of one element} in many practical situations).

Connes and Consani (section 4 of \cite{ConnesEssayRH}), motivated by Soul\'{e}'s introduction of the zeta function of a variety over the field in one element,  have proved that, to understand the classical from this Riemannn-Roch point of view of the RH, one should replace the Weil space $Y_\field = \adeles_\field/\field$ defined above for a more elaborate \emph{non-commutative space}:
$$ X_\rational := \rational^\times \backslash \adeles_\rational / \hat{\integer}^\times.$$
In other words, $X_\rational$ is the quoient of the adele class space $\rational^\times \backslash \adeles_\rational$ bu the maximal compact subgroup $\hat{\integer}^\times$ of the idele class group.

This non-commutative space has a richer avatar in the form of a topos. Given a (multiplicatively cancellative) semi-ring $R$ (e.g. $\trop$), the integer positive-indexed maps $\mathrm{Fr}_n(x) := x^n$ are always injective endomorphisms (taking the place of the Frobenius from the classical Weil approach), and then the semi-group $\naturals^\times$ acts on $R$ by this Frobenius sequence of maps. We can define a ``tropical" topos associated to this action (cf. Definition 2, Section 4 of \cite{ConnesEssayRH}):

\begin{definition}
The \emph{arithmetic site} $\AAA:=(\hat{\naturals^\times},\integer_{\max} )$ is the topos $\hat{\naturals^\times}$ of sets endowed with an action of $\naturals^\times$, together with the structure sheaf of topos-semi-rings $\OO:=\integer_{\max}$ given by the Frobenius action of $\naturals^\times$.
\end{definition}

The following remarkable theorem of Connes and Consani (Section 4.3 of \cite{ConnesEssayRH}) puts everything together:

\begin{theorem}
The points of the arithmetic site $\AAA$ over $\trop$ can be canonically identified with the non-commutative space $ X_\rational = \rational^\times \backslash \adeles_\rational / \hat{\integer}^\times.$ Moreover, the action of the idele class group on $ X_\rational = \rational^\times \backslash \adeles_\rational / \hat{\integer}^\times$ corresponds at the topos level with the action on the $\trop$-indexed Frobenius automorphisms $\mathrm{Fr}_\lambda$ of $\trop$.
\end{theorem}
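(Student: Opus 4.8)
The plan is to separate the two layers of the statement: the underlying topos $\hat{\naturals^{\times}}$ and its structure sheaf $\integer_{\max}$. Recall first that a point of the ringed topos $\AAA=(\hat{\naturals^{\times}},\integer_{\max})$ \emph{over $\trop$} is a pair consisting of a point $p$ of $\hat{\naturals^{\times}}$ --- that is, a geometric morphism $\mathbf{Sets}\to\hat{\naturals^{\times}}$ --- together with a morphism of semirings $p^{*}(\integer_{\max})\to\trop$ equivariant for the $\naturals^{\times}$-Frobenius action. So the first task is to compute $\mathrm{Pt}(\hat{\naturals^{\times}})$, and the second is to determine which $\trop$-valued evaluations of the stalk of $\integer_{\max}$ can sit over each such point.

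By the classical description of the points of a presheaf topos, $\mathrm{Pt}(\hat{\naturals^{\times}})$ is the category of flat --- equivalently, filtered --- functors on the one-object category $\naturals^{\times}$: non-empty left $\naturals^{\times}$-sets $S$ in which any two elements have a common $\naturals^{\times}$-multiple (the remaining flatness conditions being automatic because $\naturals^{\times}$ is cancellative). The combinatorial core of the proof is to classify these. Choosing a base point realizes such an $S$, canonically up to an overall $\rational^{\times}_{+}$-translation, as a sub-$\naturals^{\times}$-set of $\rational^{\times}_{+}$, and the isomorphism type of such a subset is recorded prime by prime by a valuation profile $(v_{p})$ with $v_{p}\in\integer\cup\{-\infty\}$; these profiles form a space strictly larger than $\adeles^{f}_{\rational}/\hat{\integer}^{\times}$, the profiles nonnegative at almost every prime forming exactly $\adeles^{f}_{\rational}/\hat{\integer}^{\times}$ and the rest giving further strata. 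Bringing in the structure sheaf now does two things: a $\trop$-valued, $\naturals^{\times}$-equivariant evaluation of the stalk $p^{*}(\integer_{\max})$ --- a sub-semiring of $\real_{\max}$ acted on through the maps $\mathrm{Fr}_{n}$ --- amounts to the choice of one positive real scale, i.e. the archimedean component of an adele, and the requirement that such evaluations fit together coherently over the point space restricts attention to the adelic profiles. This upgrades $\adeles^{f}_{\rational}$ to $\adeles_{\rational}$; compatibility with $-1\in\hat{\integer}^{\times}$ and with the sign on $\real$ promotes the residual $\rational^{\times}_{+}$-ambiguity to the full $\rational^{\times}$; and altogether the points of $\AAA$ over $\trop$ are identified with \[ \rational^{\times}\backslash\adeles_{\rational}/\hat{\integer}^{\times}=X_{\rational}. \]

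For the equivariance statement, the idele class group $\adeles^{\times}_{\rational}/\rational^{\times}$ acts on $X_{\rational}$ by adelic multiplication; since forming points over $\trop$ is functorial in the ringed topos, one tracks this action through the classification above. The motion at the finite places is absorbed by the quotients defining $X_{\rational}$, and the action descends to rescaling of the structure-sheaf evaluation --- which is precisely the action of the $\trop$-indexed Frobenius automorphisms $\mathrm{Fr}_{\lambda}$, $\lambda\in\real^{\times}_{+}$, on $\trop$ --- giving the stated correspondence. I expect the main obstacle to be the combinatorial heart of the argument: classifying \emph{all} flat $\naturals^{\times}$-sets, including the degenerate ones, pinning down the stalks of the tropical structure sheaf $\integer_{\max}$ as topos-semirings together with their $\trop$-valued homomorphisms, and checking that the only surviving ambiguities are exactly $\rational^{\times}$ on the left and $\hat{\integer}^{\times}$ on the right, so that the quotient is \emph{literally} the adele class space $X_{\rational}$ rather than a cover or a proper subquotient of it.
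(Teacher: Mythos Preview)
The paper does not contain a proof of this theorem. It is a survey article, and the result is quoted as a theorem of Connes and Consani with an explicit pointer to Section~4.3 of \cite{ConnesEssayRH}; immediately afterward the author writes ``We refer the reader to \cite{ConnesEssayRH} for fuller details.'' So there is no in-paper argument against which your proposal can be compared.

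That said, your sketch is broadly in line with the strategy in the original Connes--Consani papers: identifying points of the presheaf topos $\hat{\naturals^{\times}}$ with flat (filtered) $\naturals^{\times}$-sets, classifying those in adelic terms, and then using a $\trop$-valued evaluation of the stalk of $\integer_{\max}$ to supply the archimedean place and cut down to $X_{\rational}$. If you want to sharpen it into an actual proof, the delicate step is exactly the one you flag at the end: the precise bookkeeping that the isomorphism classes of flat $\naturals^{\times}$-sets together with their $\trop$-structures biject with $\rational^{\times}\backslash\adeles_{\rational}/\hat{\integer}^{\times}$, with no leftover strata and no missing identifications. For that you should consult the Connes--Consani source directly rather than this survey.
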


In analogy to Weil's approach, an adequate (cohomological) Riemann-Roch formula in this context would provide a key inequality giving a road to the classical RH. We refer the reader to \cite{ConnesEssayRH} for fuller details.

\subsection{Final remarks} Non-commutative geometry remains indomitable: its applications go from the standard model of particle physics to topological data analysis to number theory to non-commutative motives \cite{nlab:noncommutative_motive}. 

This note didn't mention non-commutative measure theory, nor the canonical time evolution of a non-commutative algebra (cf. \cite{Connes94} page 44)  interpreted by G. Segal in classical geometric language motivated by quantum field theory. It also omitted including any results of the Rosenberg-Gabriel, Kontsevich-Soibelman, Laudal, Le Bruyn and Artin-Van de Bergh's approaches to the subject and I apologize to the reader for that. 
  
Let us just finish mentioning that the ideas of Kontsevich and his collaborators motivated by homological mirror symmetry have expanded the field enormously; for example, the paper of Katzarkov, Kontsevich and Pantev \cite{MR2483750} gives a beautiful approach to non-commutative geometry via categories.

\subsubsection{Acknowledgments} I would like to thank a superb reading and excellent suggestions by the two referees. I would also acknowledge Enrique Becerra, Ludmil Katzarkov, Laurent Meersseman, Tony Pantev and Alberto Verjovsky for useful comments. I would like to thank FORDECYT (CONACYT), IMATE-UNAM, NRU HSE, RF government grant, ag. 14.641.31.000, the Institute for Mathematical Sciences of the Americas, the Simons Foundation (Homological Mirror Symmetry), the Moshinsky Foundation, the University of Geneva, the QUANTUM project from the University of Angers and the Laboratory of Mirror Symmetry HSE Moscow. Finally, my  deepest and most sincere appreciation goes to the staff at the Notices for the figure design and to Erica Flapan for a great editorial job. 

\bibliography{NCGBib}{}
\bibliographystyle{amsplain}

\end{document}